\newtheorem{lemma}{Lemma}
\newtheorem{theorem}{Theorem}
\newenvironment{proof}{\noindent {\em Proof:}}{\\\hspace*{\fill}\mbox{$\diamond$}}
\newcommand{\argmin}{\text{argmin}}
\newcommand{\nfrac}{\nicefrac}
\newcommand{\SDP}{{\sf SDP}\xspace}
\newcommand{\SPECTRAL}{{\sf SPECTRAL}\xspace}
\newcommand{\Tr}{{\rm Tr}}
\newcommand{\grad}{\nabla}
\newcommand{\cL}{\mathcal L}
\begin{document}


\title{Implementing regularization implicitly via \\ 
       approximate eigenvector computation}

\author{
Michael W. Mahoney
\thanks{
Department of Mathematics,
Stanford University,
Stanford, CA 94305,
{\tt mmahoney@cs.stanford.edu}.
}
\and
Lorenzo Orecchia
\thanks{
Computer Science Division, 
UC Berkeley,
Berkeley, CA, 94720
{\tt orecchia@eecs.berkeley.edu}.
}
}

\date{}
\maketitle

\begin{abstract}
Regularization is a powerful technique for extracting useful information 
from noisy data.
Typically, it is implemented by adding some sort of norm constraint to an 
objective function and then exactly optimizing the modified objective 
function.
This procedure often leads to optimization problems that are 
computationally more expensive than the original problem, a fact that is 
clearly problematic if one is interested in large-scale applications.
On the other hand, a large body of empirical work has demonstrated that 
heuristics, and in some cases approximation algorithms, developed to speed 
up computations sometimes have the side-effect of performing 
regularization implicitly.
Thus, we consider the question: What is the regularized optimization 
objective that an approximation algorithm is exactly optimizing?

We address this question in the context of computing approximations to the 
smallest nontrivial eigenvector of a graph Laplacian; and we consider 
three random-walk-based procedures: one based on the heat kernel of the 
graph, one based on computing the the PageRank vector associated with the 
graph, and one based on a truncated lazy random walk.
In each case, we provide a precise characterization of the manner in which 
the approximation method can be viewed as implicitly computing the exact 
solution to a regularized problem.
Interestingly, the regularization is not on the usual vector form of the 
optimization problem, but instead it is on a related semidefinite program.
\end{abstract}



\section{Introduction}
\label{sxn:intro}

Regularization is a powerful technique in statistics, machine learning, 
and data analysis for learning from or extracting useful information from
noisy data~\cite{Neu98,CH02,BL06}.
It involves (explicitly or implicitly) making assumptions about the data 
in order to obtain a ``smoother'' or ``nicer'' solution to a problem of 
interest.
The technique originated in integral equation theory, where it was of 
interest to give meaningful solutions to ill-posed problems for which a 
solution did not exist~\cite{TikhonovArsenin77}.
More recently, it has achieved widespread use in statistical data analysis, 
where it is of interest to achieve solutions that generalize well to 
unseen data~\cite{hast-tibs-fried}.
For instance, much of the work in kernel-based and manifold-based machine
learning is based on regularization in Reproducing kernel Hilbert 
spaces~\cite{SS01-book}.

Typically, regularization is implemented via a two step process: first, 
add some sort of norm constraint to an objective function of interest; 
and then, exactly optimize the modified objective function.
For instance, 
one typically considers a loss function $f(x)$ that specifies an empirical 
penalty depending on both the data and a parameter vector $x$; and a 
regularization function $g(x)$ that encodes prior assumptions about the 
data and that provides capacity control on the vector~$x$.
Then, one must solve an optimization problem of the~form:
\begin{equation}
\label{eqn:reg-gen}
\hat{x} = \argmin_x f(x) + \lambda g(x)   .
\end{equation}
A general feature of regularization implemented in this manner is that, 
although one obtains solutions that are ``better'' (in some statistical 
sense) than the solution to the original problem, one must often solve a 
modified optimization problem that is ``worse'' (in the sense of being 
more computationally expensive) than than the original optimization 
problem.%
\footnote{Think of ridge regression or the $\ell_1$-regularized 
$\ell_2$-regression problem.
More generally, however, even assuming that $g(x)$ is convex, 
one obtains a linear program or convex program that must solved.}
Clearly, this algorithmic-statistical tradeoff is problematic if one is 
interested in large-scale 
applications. 

On the other hand, it is well-known amongst practitioners that certain 
heuristics that can be used to speed up computations can sometimes have 
the side-effect of performing smoothing or regularization implicitly.
For example, ``early stopping'' is often used when a learning model such 
as a neural network is trained by an iterative gradient descent algorithm;
and ``binning'' is often used to aggregate the data into bins, upon which
computations are performed.
As we will discuss below, we have also observed a similar phenomenon in 
the empirical analysis of very large social and information 
networks~\cite{LLM10_communities_CONF}.
In these applications, the size-scale of the networks renders prohibitive 
anything but very fast nearly-linear-time algorithms, but the sparsity and 
noise properties of the networks are sufficiently complex that there is 
a need to understand the statistical properties \emph{implicit} in these 
fast algorithms in order to draw meaningful domain-specific conclusions 
from their output.

Motivated by these observations, we are interested in understanding in 
greater detail the manner in which algorithms that have superior 
algorithmic and computational properties either do or do not also have 
superior statistical properties.
In particular, we would like to know:
\begin{itemize}
\item
To what extent can one formalize the idea that performing an approximate 
computation can \emph{implicitly} lead to more regular solutions?
\end{itemize}
Rather than addressing this question in full generality, in this paper 
we will address it in the context of computing the first nontrivial 
eigenvector of the graph Laplacian.
(Of course, even this special case is of interest since a large body of 
work in machine learning, data analysis, computer vision, and scientific 
computation makes use of this vector.)
Our main result is a characterization of this implicit regularization 
in the context of three random-walk-based procedures for computing an 
approximation to this eigenvector.
In particular:
\begin{itemize} 
\item
We consider three random-walk-based procedures---one based on the heat 
kernel of the graph, one based on computing the the PageRank vector 
associated with the graph, and one based on a truncated lazy random 
walk---for computing an approximation to the smallest nontrivial 
eigenvector of a graph Laplacian, and we show that these approximation 
procedures may be viewed as implicitly solving a regularized optimization 
problem exactly.
\end{itemize}
Interestingly, in order to achieve this identification, we need to relax 
the standard spectral optimization problem to a semidefinite program.
Thus, the variables that enter into the loss function and the 
regularization term are not unit vectors, as they are more typically in 
formulations such as Problem~(\ref{eqn:reg-gen}), but instead they are 
distributions over unit vectors.
This was somewhat unexpected, and the empirical implications of this 
remain to be explored.

Before proceeding, let us pause to gain an intuition of our results in a 
relatively simple setting. 
To do so, 
consider the so-called Power Iteration Method, which takes as input an 
$n \times n$ symmetric matrix $A$ and returns as output a number 
$\lambda$ (the eigenvalue) and a vector $v$ (the eigenvector) such that 
$Av = \lambda v$.%
\footnote{Our result for the truncated lazy random walk generalizes a 
special case of the Power Method.  Formalizing the regularization implicit 
in the Power Method more generally, or in other methods such as the 
Lanczos method or the Conjugate Gradient method, is technically more 
intricate due to the renormalization at each step, which by construction
we will not need.}
The Power Iteration Method starts with an initial random vector, call it 
$\nu_0$, and it iteratively computes $\nu_{t+1} = A \nu_t /||A\nu_t||_2$.
Under weak assumptions, the method converges to $v_1$, the dominant 
eigenvector of $A$.
The reason is clear: if we expand $\nu_0 = \sum_{i=1}^{n} \gamma_i v_i$ 
in the basis provided by the eigenfunctions $\{v_i\}_{i=1}^{n}$ of $A$, 
then $\nu_t = \sum_{i=1}^{n} \gamma_i^t v_i \rightarrow v_1$.
If we truncate this method after some very small number, say $3$, 
iterations, then the output vector is clearly a suboptimal approximation 
of the dominant eigen-direction of the particular matrix $A$; but due to 
the admixing of information from the other eigenvectors, it may be a 
better or more robust approximation to the best ``ground truth 
eigen-direction'' in the ensemble from which $A$ was drawn.
It is this intuition in the context of computing eigenvectors of the
graph Laplacian that our main results formalize.

\section{Overview of the problem and approximation procedures}
\label{sxn:background}

For a connected, weighted, undirected graph $G=(V,E)$, let $A$ be its 
adjacency matrix and $D$ its diagonal degree matrix, \emph{i.e.},
$D_{ii}=\sum_{j:(ij)\in E} w_{ij}$, where $w_{ij}$ is the weight of edge 
$(ij)$.
Let $M = AD^{-1}$ be the natural random walk transition matrix associated 
with $G$, in which case $W=(I+M)/2$ is the usual lazy random walk 
transition matrix. 
(Thus, we will be \emph{post}-multiplying by \emph{column} vectors.)
Finally, let $L = I - D^{-\nfrac{1}{2}}AD^{-\nfrac{1}{2}}$ be the 
normalized Laplacian of~$G$.

We start by considering the standard spectral optimization problem.
\begin{eqnarray*}
{\SPECTRAL:} & \min & x^T L x   \\
             & \mbox{s.t.} & x^T x = 1   \\
             & & x^T D^{1/2} 1 = 0   .
\end{eqnarray*}
In the remainder of the paper, we will assume that this last constraint 
always holds, effectively limiting ourselves to be in the subspace 
$\mathbb{R}^n \perp 1$, by which we mean 
$\{x\in\mathbb{R}^{n}:x^TD^{1/2}1=0\}$.
(Omitting explicit reference to this orthogonality constraint and 
assuming that we are always working in the subspace $\mathbb{R}^n \perp 1$ 
makes the statements and the proofs easier to follow and does not impact the 
correctness of the arguments.  To check this, notice that the proofs can be
carried out in the language of linear operators without any reference to a 
particular matrix representation in~$\mathbb{R}^{n}$.)

Next, we provide a description of three related random-walk-based matrices 
that arise naturally when considering a graph $G$.
\begin{itemize}
\item
\textbf{Heat Kernel.}
The Heat Kernel of a connected, undirected graph $G$ can be defined as:
\begin{equation}
\label{eqn:heat-kernel}
H_t = \exp ( -tL )  = \sum_{k=0}^{\infty} \frac{(-t)^k}{k!}L^k  ,
\end{equation}
where $t \ge 0$ is a time parameter.
Alternatively, it can be written as $H_t = \sum_i e^{-\lambda_it}P_i$, 
where $\lambda_i$ is the $i$-th eigenvalue of $L$ and $P_i$ denotes the 
projection into the eigenspace associated with $\lambda_i$.
The Heat Kernel is an operator that satisfies the heat equation 
$\frac{\partial H_t}{\partial t} = -LH_t$ and thus that describes the 
diffusive spreading of heat on the graph.
\item
\textbf{PageRank.}
The PageRank vector $\pi(\gamma,s)$ associated with a connected, 
undirected graph $G$ is defined to be the unique solution to 
$$
\pi(\gamma,s) = \gamma s + (1-\gamma) M \pi(\gamma,s)  ,
$$
where $\gamma \in (0,1)$ is the so-called teleportation constant; 
$s \in \mathbb{R}^{n}$ is a preference vector, often taken to be (up to 
normalization) the all-ones vector; and $M$ is the natural random walk 
matrix associated with $G$~\cite{LM04}.%
\footnote{Alternatively, one can define $\pi^{\prime}(\gamma,s)$ to be 
the unique solution to $\pi = \gamma s + (1-\gamma) W \pi $, where $W$ is 
the $\nfrac{1}{2}$-lazy random walk matrix associated with~$G$.  
These two vectors are related as
$\pi^{\prime}(\gamma,s)=\pi(\frac{2\gamma}{1+\gamma},s)$~\cite{andersen06local}. }
If we fix $\gamma$ and $s$, then it is known that 
$\pi(\gamma,s) = \gamma \sum_{t=0}^{\infty} (1-\gamma)^t M^t s$, and thus
that $\pi(\gamma,s) = R_{\gamma} s$, where
\begin{equation}
\label{eqn:page-rank}
R_{\gamma} = \gamma \left(I-\left(1-\gamma \right)M \right)^{-1}   .
\end{equation}
This provides an expression for the PageRank vector $\pi(\gamma,s)$ as a 
$\gamma$-dependent linear transformation matrix $R_{\gamma}$ multiplied 
by the preference vector $s$~\cite{andersen06local}.
That is, Eqn.~(\ref{eqn:page-rank}) simply states that PageRank can be 
presented as a linear operator $R_{\gamma}$ acting on the seed $s$.
\item
\textbf{Truncated Lazy Random Walk.}
Since $M = AD^{-1}$ is the natural random walk transition matrix 
associated with a connected, undirected graph $G$, it follows that
\begin{equation}
\label{eqn:lazy-walk}
W_{\alpha}= \alpha I + (1-\alpha)M
\end{equation}
represents one step of the $\alpha$-lazy random walk transition matrix, in 
which at each step there is a holding probability $\alpha \in [0,1]$.
Just as $M$ is similar to $M^{\prime}=D^{-1/2}MD^{1/2}$, which permits the 
computation of its real eigenvalues and full suite of eigenvectors that 
can be related to those of $M$, $W_{\alpha}$ is similar to 
$W_{\alpha}^{\prime}=D^{-1/2}W_{\alpha}D^{1/2}$.
Thus, iterating the random walk $W_{\alpha}$ is similar to applying the 
Power Method to $W_{\alpha}^{\prime}$, except that the renormalization at 
each step need not be performed since the top eigenvalue is unity.
\end{itemize}
Each of these three matrices has been used to compute vectors that in 
applications are then used in place of the smallest nontrivial eigenvector 
of a graph Laplacian.
This is typically achieved by starting with an initial random vector and 
then applying the Heat Kernel matrix, or the PageRank operator, or 
truncating a Lazy Random Walk.

Finally, we recall that the solution \SPECTRAL can also be characterized 
as the solution to a semidefinite program (SDP).
To see this, consider the following \SDP:
\begin{eqnarray*}
{\SDP:} &\min & L \bullet X  \\
        & \mbox{s.t.} & \Tr(X) = I \bullet X=1  \\
        & & X \succeq 0  ,
\end{eqnarray*}
where $\bullet$ stands for the Trace, or matrix inner product, operation, 
\emph{i.e.}, $A \bullet B = \Tr(A B^T)=\sum_{ij}A_{ij}B_{ij}$ for matrices 
$A$ and $B$.
(Recall that, both here and below, $I$ is the Identity on the subspace 
perpindicular to the all-ones vector.)
\SDP is a relaxation of the spectral program \SPECTRAL from an 
optimization over unit vectors to an optimization over distributions over 
unit vectors, represented by the density matrix $X$.

To see the relationship between the solution $x$ of \SPECTRAL and the 
solution $X$ of \SDP, recall that a density matrix $X$ is a matrix of second 
moments of a distribution over unit vectors.
In this case, $L \bullet X$ is the expected value of $x^TLx$, when $x$ is
drawn from a distribution defined by $X$.
If $X$ is rank-$1$, as is the case for the solution to $\SDP$, then the 
distribution is completely concentrated on a vector $v$, and the SDP and 
vector solutions are the same, in the sense that $X=vv^T$.
More generally, as we will encounter below, the solution to an SDP may not
be rank-$1$.
In that case, a simple way to construct a vector $x$ from a distribution
defined by $X$ is to start with an $n$-vector $\xi$ with entries drawn 
i.i.d. from the normal distribution $N(0,1/n)$, and consider $x=X^{1/2}\xi$.
Note that this procedure effectively samples from a Gaussian distribution
with second moment $X$.

\section{Approximation procedures and regularized spectral optimization problems}
\label{sxn:main}

\subsection{A simple theorem characterizing the solution to a regularized SDP}
\label{sxn:main-second}

Here, we will apply regularization technique to the SDP formulation
provided by \SDP, and we will show how natural regularization functions 
yield distributions over vectors which correspond to the diffusion-based 
or random-walk-based matrices.
In order to regularize \SDP, we want to modify it such that the 
distribution is not degenerate on the second eigenvector, but instead 
spreads the probability on a larger set of unit vectors around $v$.
The regularized version of \SDP we will consider will be of the form:
\begin{eqnarray*}
{\sf (F,\eta)-SDP} & \min & L  \bullet X + \nfrac{1}{\eta} \cdot F(X)   \\
                   & \mbox{s.t.} & I \bullet X=1   \\
                   & & X \succeq 0   ,
\end{eqnarray*}
where $\eta > 0$ is a trade-off or regularization parameter determining 
the relative importance of the regularization term $F(X)$, and where $F$ 
is a real strictly-convex infinitely-differentiable rotationally-invariant 
function over the positive semidefinite cone.
(Think of $F$ as a strictly convex function of the eigenvalues of $X$.) 
For example, $F$ could be the negative of the von Neumann entropy of $X$; 
this would penalize distributions that are too concentrated on a small 
measure of vectors.
We will consider other possibilities for $F$ below.
Note that due to $F$, the solution $X$ of ${\sf (F,\eta)-SDP}$ will in 
general not be rank-$1$. 

Our main results on implicit regularization via approximate computation 
will be based on the following structural theorem that provides sufficient 
conditions for a matrix to be a solution of a regularized SDP of a certain 
form.
Note that the Lagrangian parameter $\lambda$ and its relationship with
the regularization parameter $\eta$ will play a key role in relating this
structural theorem to the three random-walk-based proceudres described 
previously.

\begin{theorem}
\label{thm:main}
Let $G$ be a connected, weighted, undirected graph, with normalized 
Laplacian $L$.
Then, the following conditions are sufficient for $X^\star$ to be an 
optimal solution to ${\sf (F,\eta)-SDP}$.
\begin{enumerate}
\item 
$X^{\star} = (\grad F)^{-1} \left(\eta \cdot ( \lambda^{*} I - L)\right)$, for 
some $\lambda^{*} \in \mathbb{R}$,
\item $I \bullet X^{\star} = 1$,
\item $X^\star \succeq 0.$
\end{enumerate}
\end{theorem}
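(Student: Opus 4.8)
The plan is to exploit the fact that $(F,\eta)$-SDP is a \emph{convex} program and to verify optimality of $X^\star$ directly via the first-order characterization of a convex minimum, rather than invoking the full machinery of SDP duality. The objective $\Phi(X) := L \bullet X + \nfrac{1}{\eta}\, F(X)$ is strictly convex, since $L \bullet X$ is linear in $X$ while $F$ is strictly convex by hypothesis; and the feasible set $\{X : I \bullet X = 1,\ X \succeq 0\}$ is convex. For such a problem it suffices to check two things: that $X^\star$ is itself feasible, and that the directional derivative of $\Phi$ at $X^\star$ is nonnegative along every direction pointing into the feasible set.

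Feasibility of $X^\star$ is exactly conditions 2 and 3. For the first-order part I would begin by computing the gradient of the objective, $\grad \Phi(X) = L + \nfrac{1}{\eta}\,\grad F(X)$, where $\grad(L \bullet X) = L$ since $L$ is symmetric. Here condition 1 does the essential work. Because $F$ is strictly convex, $\grad F$ is strictly monotone and hence injective, so $(\grad F)^{-1}$ is well defined; applying $\grad F$ to both sides of $X^\star = (\grad F)^{-1}\!\left(\eta(\lambda^* I - L)\right)$ gives $\grad F(X^\star) = \eta(\lambda^* I - L)$. Substituting this into the gradient of $\Phi$ yields the strikingly simple identity
\[
\grad\Phi(X^\star) = L + (\lambda^* I - L) = \lambda^* I,
\]
i.e., the gradient at $X^\star$ is a scalar multiple of the identity.

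The final step is to observe that this collapses the first-order term on the feasible set. For any feasible $X$, the displacement $X - X^\star$ is a feasible direction satisfying $I \bullet (X - X^\star) = (I \bullet X) - (I \bullet X^\star) = 1 - 1 = 0$, using condition 2 together with the feasibility of $X$. Consequently $\grad\Phi(X^\star) \bullet (X - X^\star) = \lambda^*\,\big(I \bullet (X - X^\star)\big) = 0$, and the first-order convexity inequality $\Phi(X) \ge \Phi(X^\star) + \grad\Phi(X^\star)\bullet(X - X^\star)$ reduces to $\Phi(X) \ge \Phi(X^\star)$. Since $X$ was an arbitrary feasible point, $X^\star$ is a global minimizer, which is the desired conclusion.

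The step I expect to be the main obstacle is technical rather than conceptual: for the natural choices of $F$ (e.g., the negative von Neumann entropy), $\grad F$ is finite only on the \emph{interior} of the positive semidefinite cone and diverges at the boundary, so the first-order inequality must be justified when a competing feasible $X$ is rank-deficient. I would handle this either by a limiting argument—applying the inequality along the segment from $X^\star$ toward $X$ and passing to the limit by continuity of $\Phi$—or by noting that the barrier behavior of $\grad F$ forces $X^\star \succ 0$, so the PSD constraint is inactive at $X^\star$; in the latter reading the argument coincides exactly with a Lagrangian/KKT derivation in which the dual multiplier for $X \succeq 0$ vanishes and $\lambda^*$ is precisely the multiplier enforcing the trace constraint $I \bullet X = 1$.
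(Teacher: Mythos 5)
Your proof is correct, and the central computation---applying $\grad F$ to condition 1 to get $\grad F(X^\star) = \eta(\lambda^* I - L)$, so that the gradient of the objective at $X^\star$ collapses to $\lambda^* I$---is exactly the stationarity identity the paper also relies on. Where you diverge is in how optimality is then certified. The paper packages the argument as weak duality: it writes down the Lagrangian with multipliers $(\lambda, U)$ for the trace and PSD constraints, observes that $X^\star$ minimizes $\cL(\cdot,\lambda^*,0)$, and checks that the dual value $h(\lambda^*,0)$ equals the primal value of the feasible point $X^\star$, which forces optimality. You instead use the first-order variational inequality for a convex objective over a convex feasible set: since $\grad\Phi(X^\star)=\lambda^* I$ and every feasible displacement satisfies $I\bullet(X-X^\star)=0$, the gradient inequality gives $\Phi(X)\ge\Phi(X^\star)$ directly. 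Your route is more self-contained (no dual function, no weak duality), while the paper's route produces an explicit dual certificate $(\lambda^*,0)$, which its subsequent remarks lean on. One small note: the boundary issue you flag at the end is not actually an obstacle---the gradient inequality $F(X)\ge F(X^\star)+\grad F(X^\star)\bullet(X-X^\star)$ for a convex $F$ differentiable at $X^\star$ holds for \emph{every} $X$ in the domain of $F$, including rank-deficient $X$ (and if $F(X)=+\infty$ there, as for $-\log\det$, the comparison is vacuous), so no limiting argument is needed.
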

\begin{proof}
For a general function $F$, we can write the Lagrangian $\cL$ for 
${\sf (F,\eta)-SDP}$ as follows:
\begin{eqnarray*}
\cL(X, \lambda, U) = L \bullet X + \frac{1}{\eta} \cdot F(X) - \lambda \cdot (I \bullet X -1) - U \bullet X
\end{eqnarray*}
where $\lambda \in \mathbb{R}, U \succeq 0.$
The dual objective function is 
$$h(\lambda, U) = \min_{X \succeq 0} \cL(X, \lambda, U).$$
As $F$ is strictly convex, differentiable and rotationally invariant, the gradient of $F$ over the positive semidefinite cone is invertible and the righthand side is minimized when 
$$X=(\grad F)^{-1}(\eta \cdot ( -L +\lambda^{*} \cdot I + U)) , $$
where $\lambda^{*}$ is chosen such that the second condition in the statement
of the theorem is satisfied.
Hence, 
$$
h(\lambda^\star, 0) = L \bullet X^\star + \frac{1}{\eta} \cdot F(X^\star) - \lambda^\star \cdot (I \bullet X^\star -1 ) = L \bullet X^\star + \frac{1}{\eta} \cdot F(X^\star)  .
$$
By Weak Duality, this implies that $X^\star$ is an optimal solution to ${\sf (F,\eta)-SDP}.$
\end{proof}

\noindent
Two clarifying remarks regarding this theorem are in order.
First, the fact that such a $\lambda^*$ exists is an assumption of the 
theorem.
Thus, in fact, the theorem is just a statement of the KKT conditions and 
strong duality holds for our SDP formulations. 
For simplicity and to keep the exposition self-contained, we decided to 
present the proof of optimality, which is extremely easy in the case of 
an SDP with only linear constraints.
Second, we can plug the dual solution $(\lambda^*, 0)$ into the dual 
objective and show that, under the assumptions of the theorem, we obtain a 
value equal to the primal value of $X^*$. 
This certifies that $X^*$ is optimal. 
Thus, we do not need to assume $U=0$; we just choose to plug in this 
particular dual solution.

\subsection{The connection between approximate eigenvector computation and implicit statistical regularization}
\label{sxn:main-third}

In this section, we will consider the three diffusion-based or 
random-walked-based heuristics described in Section~\ref{sxn:background}, 
and we will show that each may be viewed as solving ${\sf (F,\eta)-SDP}$ 
for an appropriate value of $F$ and $\eta$.

\paragraph{Generalized Entropy and the Heat Kernel.} 
Consider first the Generalized Entropy function: 
\begin{equation}
\label{eqn:gen-ent}
F_H(X) = \Tr(X \log X) - \Tr(X)   , 
\end{equation}
for which:
\begin{eqnarray*}
(\grad F_H)(X)      &=& \log X \\
(\grad F_H)^{-1}(Y) &=& \exp{Y}   .
\end{eqnarray*}
Hence, the solution to $({\sf F_H, \eta})-\SDP$ has the form:
\begin{equation}
\label{eqn:Sol-gen-ent}
X_H^\star = \exp(\eta \cdot (\lambda I - L))    ,
\end{equation}
for appropriately-chosen values of $\lambda$ and $\eta$.
Thus, we can establish the following lemma.

\begin{lemma}
Let $X_H^\star$ be an optimal solution to ${\sf (F,\eta)-\SDP}$, when 
$F(\cdot)$ is the Generalized Entropy function, given by 
Equation~(\ref{eqn:gen-ent}).
Then 
$$
X_H^\star 
          = \frac{H_{\eta}}{\Tr\left[H_{\eta}\right]}  ,
$$
which corresponds to a ``scaled'' version of the Heat Kernel matrix with 
time parameter $t=\eta$. 
\end{lemma}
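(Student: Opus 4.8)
The plan is to apply Theorem~\ref{thm:main} directly, since the gradient and its inverse for $F_H$ have already been computed in the text immediately preceding the lemma. First I would substitute $(\grad F_H)^{-1}(Y) = \exp{Y}$ into condition~1 of the theorem, which yields $X_H^\star = \exp\left(\eta\cdot(\lambda^\star I - L)\right)$ for some $\lambda^\star \in \mathbb{R}$; this is exactly Equation~(\ref{eqn:Sol-gen-ent}). The remaining work is to identify this expression with a scaled Heat Kernel and to fix the free constant $\lambda^\star$.

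The next step is to factor the matrix exponential. Because $\eta\lambda^\star I$ is a scalar multiple of the identity, it commutes with $-\eta L$, so the exponential of the sum splits as a product, giving $X_H^\star = e^{\eta\lambda^\star}\exp(-\eta L) = e^{\eta\lambda^\star} H_\eta$, where $H_\eta = \exp(-\eta L)$ is precisely the Heat Kernel at time $t=\eta$ defined in Equation~(\ref{eqn:heat-kernel}). Thus the optimal solution is forced to be a positive scalar multiple of $H_\eta$, and only the value of that scalar remains to be determined.

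I would then pin down the scalar using the normalization constraint, condition~2 of the theorem. Taking the trace of $X_H^\star = e^{\eta\lambda^\star} H_\eta$ and setting $I \bullet X_H^\star = \Tr(X_H^\star) = 1$ gives $e^{\eta\lambda^\star}\Tr[H_\eta] = 1$, hence $e^{\eta\lambda^\star} = 1/\Tr[H_\eta]$. Substituting back produces the claimed form $X_H^\star = H_\eta/\Tr[H_\eta]$. Finally, condition~3 requires no additional argument: since $H_\eta$ is the exponential of the symmetric matrix $-\eta L$ it is positive definite, and dividing by the positive quantity $\Tr[H_\eta]$ preserves positive semidefiniteness.

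The computation is essentially mechanical, so the only point demanding care — rather than a genuine obstacle — is verifying that a suitable $\lambda^\star$ actually exists, i.e.\ that the normalization can be met. This follows because $\Tr[H_\eta] > 0$ (as $H_\eta$ is positive definite), so the equation $e^{\eta\lambda^\star} = 1/\Tr[H_\eta]$ always has a real solution $\lambda^\star$. It is precisely this normalizing scalar, absorbed by condition~2, that converts the bare exponential $\exp(\eta(\lambda^\star I - L))$ into the \emph{scaled} Heat Kernel asserted by the lemma.
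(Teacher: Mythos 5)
Your proposal is correct and follows essentially the same route as the paper: apply Theorem~\ref{thm:main}, factor the exponential into $e^{\eta\lambda^\star}\exp(-\eta L)$, and choose $\lambda^\star = -\nfrac{1}{\eta}\log\Tr[\exp(-\eta L)]$ to satisfy the trace constraint. Your additional remarks on the existence of $\lambda^\star$ and positive semidefiniteness are fine and only make explicit what the paper leaves implicit.
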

\begin{proof}
From Equation~(\ref{eqn:Sol-gen-ent}), it follows that
$ X_H^\star = \exp(- \eta \cdot L) \cdot \exp(\eta \cdot \lambda) $, and
thus by setting $\lambda = -\nfrac{1}{\eta}\log(\Tr(\exp(-\eta\cdot L)))$, 
we obtain 
 the expression for $X_H^\star$ given in the lemma.
Thus, $X_H^\star \succeq 0$ and $\Tr(X_H^\star) = 1$, and by 
Theorem~\ref{thm:main} the lemma follows.
\end{proof}

\noindent
Conversely, given a graph $G$ and time parameter $t$, the 
Heat Kernel of Equation~(\ref{eqn:heat-kernel}) can be characterized 
as the solution to the regularized $({\sf F_H, \eta})-\SDP$, with 
the regularization parameter $\eta=t$ (and for the value of the Lagrangian 
parameter $\lambda$ as specified in the proof).

\paragraph{Log-determinant and PageRank.} 
Next, consider the Log-determinant function: 
\begin{equation}
\label{eqn:log-det}
F_D(X) =  - \log\det(X)  ,
\end{equation}
for which:
\begin{eqnarray*}
(\grad F_D)(X)      &=& -X^{-1} \\
(\grad F_D)^{-1}(Y) &=& -Y^{-1}   .
\end{eqnarray*}
Hence, the solution to $({\sf F_D, \eta})-\SDP$ has the form:
\begin{equation}
\label{eqn:Sol-log-det}
X_D^\star = - (\eta \cdot (\lambda I - L))^{-1}   ,
\end{equation}
for appropriately-chosen values of $\lambda$ and $\eta$.
Thus, we can establish the following lemma.

\begin{lemma}
Let $X_D^\star$ be an optimal solution to ${\sf (F,\eta)-\SDP}$, when 
$F(\cdot)$ is the Log-determinant function, given by 
Equation~(\ref{eqn:log-det}).
Then 
$$
X_D^\star = \frac{D^{-1/2} R_\gamma D^{1/2}}{\Tr\left[ R_\gamma \right]}
$$
which corresponds to a ``scaled-and-streached'' version of the 
PageRank matrix $R_\gamma$ of Equation~(\ref{eqn:page-rank}) with 
teleportation parameter $\gamma$  depending on  $\eta$.
\end{lemma}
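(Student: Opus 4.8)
The plan is to start from the closed form for the optimizer already supplied by Theorem~\ref{thm:main} and Equation~(\ref{eqn:Sol-log-det}), namely $X_D^\star = -\left(\eta\,(\lambda I - L)\right)^{-1}$, and to manipulate the resolvent of $L$ until the PageRank operator $R_\gamma$ appears, with the teleportation constant $\gamma$ read off from the Lagrange multiplier $\lambda$. First I would substitute the definition of the normalized Laplacian: writing $\mathcal{M} = D^{-\nfrac{1}{2}}AD^{-\nfrac{1}{2}} = I - L$ for the symmetrized walk matrix, the identity $L = I - \mathcal{M}$ turns the solution into
$$
X_D^\star = \frac{1}{\eta}\bigl((1-\lambda)I - \mathcal{M}\bigr)^{-1} = \frac{1}{\eta(1-\lambda)}\Bigl(I - \tfrac{1}{1-\lambda}\mathcal{M}\Bigr)^{-1},
$$
which is already a resolvent of $\mathcal{M}$ of the same shape as PageRank.

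Next I would bring $R_\gamma$ into the same symmetric coordinates. Since $M = AD^{-1}$ and $\mathcal{M} = D^{-\nfrac{1}{2}}MD^{\nfrac{1}{2}}$, conjugating the definition~(\ref{eqn:page-rank}) by $D^{\pm\nfrac{1}{2}}$ and using $D^{-\nfrac{1}{2}}B^{-1}D^{\nfrac{1}{2}} = (D^{-\nfrac{1}{2}}BD^{\nfrac{1}{2}})^{-1}$ gives $D^{-\nfrac{1}{2}}R_\gamma D^{\nfrac{1}{2}} = \gamma\bigl(I - (1-\gamma)\mathcal{M}\bigr)^{-1}$. This symmetrization is exactly what makes the (non-symmetric) PageRank operator eligible to be a density matrix, and it accounts for the ``stretching'' by $D^{\pm\nfrac{1}{2}}$ in the statement. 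Matching the two resolvents then amounts to imposing $\tfrac{1}{1-\lambda} = 1-\gamma$, i.e.\ $\gamma = \tfrac{-\lambda}{1-\lambda}$; this is the promised dependence of the teleportation parameter on $\eta$ (through the multiplier $\lambda$). With this choice the two expressions agree up to a scalar, yielding $X_D^\star = \tfrac{1-\gamma}{\eta\gamma}\,D^{-\nfrac{1}{2}}R_\gamma D^{\nfrac{1}{2}}$.

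Finally I would fix the scalar using the normalization constraint and then invoke Theorem~\ref{thm:main}. Imposing $I\bullet X_D^\star = \Tr(X_D^\star) = 1$ and using cyclicity of the trace, $\Tr(D^{-\nfrac{1}{2}}R_\gamma D^{\nfrac{1}{2}}) = \Tr(R_\gamma)$, forces $\tfrac{1-\gamma}{\eta\gamma} = 1/\Tr[R_\gamma]$, which is precisely the constant in the claimed formula $X_D^\star = D^{-\nfrac{1}{2}}R_\gamma D^{\nfrac{1}{2}}/\Tr[R_\gamma]$. Condition~(2) of the theorem then holds by construction, while condition~(3) holds because the normalization selects $1-\lambda$ strictly above the leading eigenvalue of $\mathcal{M}$ on $\mathbb{R}^n \perp 1$ (which is itself below $1$), making $(1-\lambda)I - \mathcal{M} \succ 0$ and hence $X_D^\star \succ 0$.

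I expect the main obstacle to be the bookkeeping that identifies the Lagrange multiplier $\lambda$ with a genuine teleportation constant $\gamma \in (0,1)$: one must check that the scalar produced by matching the two resolvents and the scalar forced by the trace normalization are mutually consistent, and that the value of $\lambda$ selected by normalization lies in the regime (here $\lambda < 0$) for which $\gamma = -\lambda/(1-\lambda)$ is a legitimate probability. Everything else reduces to a routine resolvent identity and a similarity transform by $D^{\pm\nfrac{1}{2}}$.
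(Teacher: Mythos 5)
Your proposal is correct and follows essentially the same route as the paper's own proof: substitute $L = I - D^{-\nfrac{1}{2}}AD^{-\nfrac{1}{2}}$ into the resolvent form $X_D^\star = \nfrac{1}{\eta}\,(L-\lambda I)^{-1}$, set $\gamma = \nfrac{\lambda}{(\lambda-1)}$ (identical to your $\gamma = -\lambda/(1-\lambda)$), conjugate by $D^{\pm\nfrac{1}{2}}$ to expose $R_\gamma$, and fix the scalar via $\Tr[X_D^\star]=1$. Your scalars ($\nfrac{(1-\gamma)}{\eta\gamma} = \nfrac{-1}{\eta\lambda}$) match the paper's, and your remark on positivity just makes explicit the paper's observation that $\lambda<0$ suffices.
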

\begin{proof}
Recall that $L = I - D^{-\nfrac{1}{2}}AD^{-\nfrac{1}{2}}$. 
Since $ X_D^\star = \nfrac{1}{\eta} \cdot (L - \lambda I)^{-1} $,
by standard manipulations it follows that
$$
X_D^\star = \frac{1}{\eta}
            \left( (1-\lambda) I - D^{-1/2}AD^{-1/2} \right)^{-1}   .
$$
Thus, $X_D^\star \succeq 0$ if $\lambda \le 0$, and
 $X_D^\star \succ 0$ if $\lambda < 0$.
If we set $\gamma = \frac{\lambda}{\lambda-1}$
(which varies from $1$ to $0$, as $\lambda$ varies from $-\infty$ to $0$),
then it can be shown that
$$
X_D^\star = \frac{-1}{\eta\lambda}
            D^{-1/2}
            \gamma
            \left( I - (1-\gamma)AD^{-1} \right)^{-1}   
            D^{1/2}   .
$$
By requiring that $1=\Tr[X_D^\star]$, it follows that 
$$
\eta = (1-\gamma) \Tr\left[ \left( I-(1-\gamma)AD^{-1}\right)^{-1}\right]
$$
and thus that
$\eta\lambda = - \Tr\left[\gamma(I-(1-\gamma)AD^{-1})^{-1}\right]$.
Since $ R_\gamma = \gamma(I-(1-\gamma)AD^{-1})^{-1}$, the lemma follows.
\end{proof}

\noindent
Conversely, given a graph $G$ and teleportation parameter $\gamma$, the 
PageRank of Equation~(\ref{eqn:page-rank}) can be characterized 
as the solution to the regularized $({\sf F_D, \eta})-\SDP$, with 
the regularization parameter $\eta$ as specified in the proof.

\paragraph{Standard $p$-norm and Truncated Lazy Random Walks.} 
Finally, consider the Standard $p$-norm function, for $p > 1$: 
\begin{equation}
\label{eqn:pnorm}
F_p(X) = \frac{1}{p} ||X||_p^p = \frac{1}{p} \Tr(X^p)   ,
\end{equation}
for which:
\begin{eqnarray*}
(\grad F_p)(X)      &=& X^{p-1} \\
(\grad F_p)^{-1}(Y) &=& Y^{\nfrac{1}{(p-1)}} .  
\end{eqnarray*}
Hence, the solution to $({\sf F_p, \eta})-\SDP$ has the form
\begin{equation}
\label{eqn:Sol-pnorm}
X_p^\star = (\eta \cdot(\lambda I -L))^{q-1} ,
\end{equation}
where $q > 1$ is such that $\nfrac{1}{p}+\nfrac{1}{q} = 1$,
for appropriately-chosen values of $\lambda$ and $\eta$.
Thus, we can establish the following lemma.
\begin{lemma}
Let $X_p^\star$ be an optimal solution to ${\sf (F,\eta)-\SDP}$, when 
$F(\cdot)$ is the Standard $p$-norm function, for $p > 1$, given by 
Equation~(\ref{eqn:pnorm}).
Then 
$$
X_p^\star = \frac{D^{\frac{-(q-1)}{2}} W_{\alpha}^{q-1}D^{\frac{q-1}{2}} }
                 { \Tr\left[ W_{\alpha}^{q-1}\right] }
$$
which corresponds to a ``scaled-and-streached'' version of $q-1$ steps of 
the Truncated Lazy Random Walk matrix $W_{\alpha}$ of 
Equation~(\ref{eqn:lazy-walk}) with laziness parameter $\alpha$ depending on 
$\eta$.
\end{lemma}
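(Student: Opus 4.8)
The plan is to follow the same template as the two preceding lemmas, now using that for the Standard $p$-norm the inverse gradient is a fractional matrix power, $(\grad F_p)^{-1}(Y)=Y^{\nfrac{1}{(p-1)}}$, together with the algebraic identity $\nfrac{1}{(p-1)}=q-1$ (which follows from $\nfrac{1}{p}+\nfrac{1}{q}=1$). Theorem~\ref{thm:main} then hands us the candidate $X_p^\star=(\eta\cdot(\lambda I-L))^{q-1}$ of Equation~(\ref{eqn:Sol-pnorm}), and the entire task reduces to rewriting this $(q-1)$-th power in terms of the lazy-walk operator $W_\alpha$ of Equation~(\ref{eqn:lazy-walk}) and then checking the three sufficient conditions of the theorem.

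First I would substitute $L=I-D^{-1/2}AD^{-1/2}$ to obtain $\lambda I-L=(\lambda-1)I+D^{-1/2}AD^{-1/2}$, and compare it against the symmetrized walk operator $W_\alpha^\prime=D^{-1/2}W_\alpha D^{1/2}=\alpha I+(1-\alpha)D^{-1/2}AD^{-1/2}$. Factoring out $(1-\alpha)$ and matching the coefficient of $D^{-1/2}AD^{-1/2}$ forces $\nfrac{\alpha}{(1-\alpha)}=\lambda-1$, i.e.\ the laziness parameter is pinned to the Lagrange multiplier via $\lambda=\nfrac{1}{(1-\alpha)}$; this is the analogue of the substitution $\gamma=\nfrac{\lambda}{(\lambda-1)}$ used in the PageRank lemma. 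The payoff is the clean relation $\lambda I-L=\nfrac{1}{(1-\alpha)}\cdot W_\alpha^\prime$, whence $X_p^\star=\left(\nfrac{\eta}{(1-\alpha)}\right)^{q-1}(W_\alpha^\prime)^{q-1}$.

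Next I would push the fractional power through the similarity transform: since $W_\alpha^\prime=S^{-1}W_\alpha S$ with $S=D^{1/2}$ and $W_\alpha^\prime$ is real symmetric, the functional calculus gives $(W_\alpha^\prime)^{q-1}=D^{-1/2}W_\alpha^{q-1}D^{1/2}$. I expect this to be the main obstacle, because $q-1$ need not be an integer, so one must argue that the real fractional power is well defined and positive semidefinite rather than simply invoking $B^k=D^{-1/2}W_\alpha^kD^{1/2}$ for integer $k$. This is precisely where condition (3) of Theorem~\ref{thm:main} enters: the eigenvalues of $D^{-1/2}AD^{-1/2}$ lie in $[-1,1]$, so $W_\alpha^\prime$ is positive semidefinite exactly when the holding probability is large enough, with $\alpha\ge\nfrac{1}{2}$ sufficing (equivalently, $\lambda$ exceeds the top eigenvalue of $L$); this both makes the real fractional power of the symmetric matrix $W_\alpha^\prime$ legitimate and guarantees $X_p^\star\succeq 0$.

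Finally I would fix the remaining scalar by the trace constraint. Imposing $I\bullet X_p^\star=\Tr(X_p^\star)=1$ and using cyclicity of the trace, $\Tr[D^{-1/2}W_\alpha^{q-1}D^{1/2}]=\Tr[W_\alpha^{q-1}]$, forces the prefactor to equal $1/\Tr[W_\alpha^{q-1}]$ and simultaneously determines $\eta$ as a function of $\alpha$ (this is condition (2)). With conditions (1)--(3) all verified, Theorem~\ref{thm:main} certifies optimality and delivers $X_p^\star$ in the stated trace-normalized form $D^{-1/2}W_\alpha^{q-1}D^{1/2}/\Tr[W_\alpha^{q-1}]$, in direct parallel with the PageRank lemma, completing the proof.
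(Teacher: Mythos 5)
Your proof is correct and takes essentially the same route as the paper's: the same substitution $\alpha=\nfrac{(\lambda-1)}{\lambda}$ (equivalently $\lambda=\nfrac{1}{(1-\alpha)}$), the same rewriting of $\lambda I-L$ as a multiple of $D^{-1/2}W_\alpha D^{1/2}$, and the same trace normalization to pin down $\eta$ via Theorem~\ref{thm:main}. You are in fact slightly more careful on two points where the paper is loose: you correctly require $\alpha\ge\nfrac{1}{2}$ (i.e.\ $\lambda\ge\lambda_{\max}(L)$) so that the fractional power is defined and $X_p^\star\succeq 0$, rather than the paper's $\lambda\ge 1$; and your final form $D^{-1/2}W_\alpha^{q-1}D^{1/2}$ is the algebraically correct conjugation, agreeing with the exponents $\pm(q-1)/2$ printed in the lemma only when $q=2$.
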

\begin{proof}
Recall that $L = I - D^{-\nfrac{1}{2}}AD^{-\nfrac{1}{2}}$. 
Since $ X_p^\star = \eta^{q-1} \cdot (\lambda I -L)^{q-1} $, 
by standard manipulations it follows that
$$
X_p^\star = \eta^{q-1} \left( (\lambda-1)I + D^{-1/2}AD^{-1/2} \right)^{q-1}
$$
Thus, $X_p^\star \succeq 0$ if $\lambda \ge 1$, and
 $X_p^\star \succ 0$ if $\lambda > 1$.
If we set $\alpha= \frac{\lambda-1}{\lambda}$
(which varies from $0$ to $1$, as $\lambda$ varies from $1$ to $\infty$),
then it can be shown that
$$
X_p^\star = (\eta\lambda)^{q-1}
            D^{-(q-1)/2}
            \left( \alpha I - (1-\alpha)AD^{-1} \right)^{q-1}   
            D^{(q-1)/2}   .
$$
By requiring that $1=\Tr[X_p^\star]$, it follows that 
$$
\eta = (1-\alpha) \left\{ \Tr\left[ \left( \alpha I + (1-\alpha) AD^{-1} \right)^{q-1} \right] \right\}^{1-p}
$$
and thus that
$\eta\lambda = \left\{ \Tr\left[ \left( \alpha I + (1-\alpha) AD^{-1} \right)^{q-1} \right] \right\}^{1-p}$.
Since $ W_\alpha = \alpha I + (1-\alpha) AD^{-1} $, the lemma follows.
\end{proof}

\noindent
Conversely, given a graph $G$, a laziness parameter $\alpha$, and a number 
of steps $q^{\prime}=q-1$, the
Truncated Lazy Random Walk of Equation~(\ref{eqn:lazy-walk})
can be characterized 
as the solution to the regularized $({\sf F_p, \eta})-\SDP$, with 
the regularization parameter $\eta$ as specified in the proof.

\section{Discussion and Conclusion}

There is a large body of empirical and theoretical work with a broadly 
similar flavor to ours.
Here, we provide just a few citations that most informed our approach.
\begin{itemize}
\item
In machine learning, 
Belkin, Niyogi, and Sindhwan describe a geometrically-motivated framework 
within which semi-supervised learning algorithms can be 
constructed~\cite{BNS06};
Saul and Roweis (and many others, but less explicitly) observe that adding 
a regularization term to improve numerical properties also ``acts to 
penalize large weights that exploit correlations beyond some level of 
precision in the data sampling process''~\cite{SR03};
Rosasco, De~Vito, and Verri describe how a large class of regularization 
methods designed for solving ill-posed inverse problems gives rise to 
novel learning algorithms~\cite{rosasco05_spectral};
Zhang and Yu show that in boosting, early stopping (as opposed to waiting
for full convergence) leads to regularization and hence better 
prediction~\cite{ZY05};
Shi and Yu describe statistical aspects of binning in Gaussian kernel 
regularization~\cite{SY05}; and
Bishop observes that training with noise can be equivalent to Tikhonov 
regularization~\cite{Bis95}.
\item
In numerical linear algebra,
O'Leary, Stewart, and Vandergraft have described issues that arise in 
estimating the largest eigenvalue of a positive definite matrix with 
the power method~\cite{OSV79}; and
Parlett, Simon, and Stringer have described convergence issues that arise
when estimating the largest eigenvalue with an iterative 
method~\cite{PSS82}.
\item
In the theory of algorithms,
Spielman and Teng describe how to perform local graph partitioning using
truncated random walks~\cite{Spielman:2004};
Andersen, Chung, and Lang describe an improved local graph partitioning 
algorithm PageRank vectors~\cite{andersen06local}; and
Chung describes how to perform similar operations by using the heat kernel 
and viewing it as the so-called pagerank of a 
graph~\cite{Chung07_heatkernelPNAS}.
\item
In internet data analysis, 
Andersen and Lang use these methods to try to find communities in large 
networks~\cite{andersen06seed};
Leskovec, Lang, and Mahoney use these and other methods to show that there 
do not exist good large communities in large social and information 
networks~\cite{LLDM08_communities_CONF,LLM10_communities_CONF};
and Lu \emph{et al.} empirically evaluate implicit regularization 
constraints for improved online review quality prediction~\cite{LTNP10}.
\end{itemize}
None of this work, however, takes the approach we have adopted of asking: 
What is the regularized optimization objective that a heuristic or 
approximation algorithm is exactly optimizing?


We should note that one can interpret our main results from one of two 
alternate perspectives.
From the perspective of worst-case analysis, we provide a simple 
characterization of several related methods for approximating the smallest
nontrivial eigenvector of a graph Laplacian as solving a related 
optimization problem.
By adopting this view, it should perhaps be less surprising that these 
methods have Cheeger-like inequalities, with related algorithmic 
consequences, associated with 
them~\cite{Spielman:2004,andersen06local,Chung07_heatkernelPNAS,chung07_fourproofs}.
From a statistical perspective, one could imagine one method or another 
being more or less appropriate as a method to compute robust approximations
to the smallest nontrivial eigenvector of a graph Laplacian, depending on 
assumptions being made about the data.
By adopting this view, it should perhaps be less surprising that these 
methods have performed well at identifying structure in sparse and noisy 
networks~\cite{andersen06seed,LLDM08_communities_CONF,LLM10_communities_CONF,LTNP10}.

The particular results that motivated us to ask this question had to do 
with recent empirical work on characterizing the clustering and community 
structure in very large social and information 
networks~\cite{LLDM08_communities_CONF,LLM10_communities_CONF}.
As a part of that line of work, Leskovec, Lang, and Mahoney 
(LLM)~\cite{LLM10_communities_CONF} were interested in understanding the 
artifactual properties induced in output clusters as a function of 
different approximation algorithms for a given objective function (that 
formalized the community concept).
LLM observed a severe tradeoff between the objective function value and 
the ``niceness'' of the clusters returned by different approximation 
algorithms.
This phenomenon is analogous to the bias-variance tradeoff that is 
commonly-observed in statistics and machine learning, except that LLM did 
not perform any explicit regularization---instead, they observed this 
phenomenon as a function of different approximation algorithms to compute 
approximate solutions to the intractable graph partitioning problem.

Although we have focused in this paper simply on the problem of computing 
an eigenvector, one is typically interested in computing eigenvectors in 
order to perform some downstream data analysis or machine learning task.
For instance, one might be interested in characterizing the clustering 
properties of the data.  
Alternatively, the goal might be to perform classification or regression or 
ranking.  
It would, of course, be of interest to understand how the concept of 
\emph{implicit regularization via approximate computation} extends to the 
output of algorithms for these problems.
More generally, though, it would be of interest to understand how this 
concept of implicit regularization via approximate computation extends to 
intractable graph optimization problems (that are not obviously 
formulatable as vector space problems) that are more popular in computer 
science.
That is: What is the (perhaps implicitly regularized) optimization problem
that an approximation algorithm for an intractable optimization problem is 
implicitly optimizing?
Such graph problems arise in many applications, but the the formulation 
and solution of these graph problems tends to be quite different than that 
of matrix problems that are more popular in machine learning and 
statistics.
Recent empirical and theoretical evidence, however, clearly suggests that 
regularization will be fruitful in this more general setting.


 \end{document}